\documentclass[preprint,12pt]{elsarticle}

\usepackage{amssymb}
\usepackage{amsmath}
\usepackage{amsthm}
\usepackage{mathtools}
\usepackage{algorithm}
\usepackage{algpseudocode}
\usepackage{bm}

\newtheorem{lemma}{Lemma}[section]
\newtheorem{theorem}{Theorem}[section]
\newtheorem{corollary}{Corollary}[section]
\newtheorem{proposition}{Proposition}[section]
\newtheorem{remark}{Remark}[section]

\newtheorem{observation}{Observation}[section]
\algrenewcommand\algorithmicrequire{\textbf{Input:}}
\algrenewcommand\algorithmicensure{\textbf{Output:}}

\begin{document}

\begin{frontmatter}

\title{Removing bottlenecks in the recognition of small $(k,\ell)$-graph classes} 

\author[1,2]{Flavia Bonomo-Braberman}

\author[1,3]{Min Chih Lin}

\author[1,2]{Ignacio~Maqueda}

\affiliation[1]{organization={Universidad de Buenos Aires, Facultad de Ciencias Exactas y Naturales. Departamento de Computación},
            addressline={Pres. Dr. Raúl Alfonsín s/n},
             city={Buenos Aires},
             postcode={1428},
             country={Argentina}}

\affiliation[2]{organization={CONICET-Universidad de Buenos Aires, Instituto de Investigación en Ciencias de la Computación (ICC)},
            addressline={Pres. Dr. Raúl Alfonsín s/n},
             city={Buenos Aires},
             postcode={1428},
             country={Argentina}}

\affiliation[3]{organization={CONICET-Universidad de Buenos Aires, Instituto de Cálculo (IC)},
            addressline={Pres. Dr. Raúl Alfonsín s/n},
             city={Buenos Aires},
             postcode={1428},
             country={Argentina}}

\begin{abstract}
A graph is a $(k,\ell)$-graph if its vertex set can be partitioned into
$k$ independent sets and $\ell$ cliques. This family simultaneously
generalizes split, bipartite, and co-bipartite graphs. While the recognition
problem is NP-complete whenever $k\geq 3$ or $\ell\geq 3$, the remaining
small cases are polynomial-time solvable. In this paper we revisit the
known recognition algorithms for the first nontrivial polynomial cases,
namely $(2,1)$-, $(1,2)$-, and $(2,2)$-graphs, and show how to remove
specific bottlenecks in their existing recognition procedures. For
$(2,1)$-graphs, we show that the extra quadratic enumeration in the
algorithm of Brandst\"adt, Le and Szymczak can be avoided by exploiting the
structure of a shortest odd cycle in the relevant residual graph, reducing
the running time from $O((n+m)^2)$ to $O(n(n+m))$. By complementation, this
yields an $O(n(n+\overline m))$-time recognition algorithm for
$(1,2)$-graphs, where $\overline m$ denotes the number of edges of the
complement graph. For $(2,2)$-graphs, we refine the sparse--dense partition
framework of Feder, Hell, Klein and Motwani by restricting the local-search
enumeration to sets that are simultaneously bipartite and co-bipartite, and
by using the improved algorithms for $(2,1)$- and $(1,2)$-graphs as
preprocessing tools. This gives an
$O(n^4(n+\min\{m,\overline m\})^3)$-time recognition algorithm for
$(2,2)$-graphs.
\end{abstract}

\begin{keyword}
$(k,\ell)$-graphs \sep graph partitions \sep recognition algorithms \sep sparse--dense partitions \sep bipartite and co-bipartite graphs

\end{keyword}

\end{frontmatter}

\section{Introduction}

A graph $G=(V,E)$ is called a \emph{$(k,\ell)$-graph} if its vertex set $V$
can be partitioned into $k$ independent sets and $\ell$ cliques. This notion
was introduced by Brandst\"adt~\cite{BRANDSTADT199647}. Independently,
Alekseev and Lozin~\cite{1116-AlekseevL03} later considered the same type
of mixed partitions under the name of \emph{$(p,q)$-colorable graphs}, where
the vertices are partitioned into at most $p$ cliques and $q$ independent
sets. They proved that the maximum-weight independent set problem is
polynomial-time solvable in this class if and only if $q \leq 2$, unless
$P=NP$.

The class of $(k,\ell)$-graphs provides a common framework for several
classical graph classes. Split graphs are precisely the $(1,1)$-graphs,
bipartite graphs are the $(2,0)$-graphs, and co-bipartite graphs are the
$(0,2)$-graphs. These three classes admit linear-time recognition algorithms.
In contrast, the recognition problem for $(k,\ell)$-graphs is NP-complete
whenever $k\geq 3$ or $\ell\geq 3$~\cite{BRANDSTADT199647,BRANDSTADT1998295}.
Thus, after the linear-time cases with $k+\ell\leq 2$, the first nontrivial
polynomial cases are $(2,1)$-, $(1,2)$-, and $(2,2)$-graphs.

The algorithmic history of these boundary cases is somewhat delicate.
Brandst\"adt~\cite{BRANDSTADT199647} originally proposed polynomial-time
recognition algorithms for $(2,1)$-, $(1,2)$-, and $(2,2)$-graphs, but a
subsequent corrigendum~\cite{BRANDSTADT1998295} reported that those
algorithms were incorrect. Brandst\"adt, Le and Szymczak~\cite{BRANDSTADT199859}
then gave an $O((|V|+|E|)^2)$-time recognition algorithm for $(2,1)$-graphs.
By complementation, this also yields an $O((n+\overline m)^2)$-time
algorithm for $(1,2)$-graphs, where $n=|V|$ and
$\overline m=\binom n2-m$ is the number of edges of the complement graph.
Feder, Hell, Klein and Motwani~\cite{DBLP:conf/stoc/FederHKM99}
studied a broader family of sparse--dense partition problems. Their
framework gives, in particular, an
$O(|V|^{10}(|V|+|E|))$-time recognition algorithm for $(2,2)$-graphs.

In this paper we revisit these recognition algorithms and show how to remove
specific bottlenecks in their existing recognition procedures. The improvements
come from exploiting additional structure in the instances left unresolved by
the previous algorithms. For $(2,1)$-graphs, the only part of the algorithm of
Brandst\"adt, Le and Szymczak that exceeds the $O(n(n+m))$ bound is the
residual case in which a certain undecided set induces no triangle. We show
that, in this case, the extra quadratic enumeration can be avoided: it is
enough to inspect a linear number of candidate vertices or pairs, obtained
from the structure of a shortest odd cycle in the residual graph. This reduces
the running time from $O((n+m)^2)$ to $O(n(n+m))$.

The case of $(1,2)$-graphs follows by applying the same argument to the
complement graph. Hence the running time becomes
$O(n(n+\overline m))$, where $\overline m=\binom n2-m$ denotes the number
of edges of the complement.

For $(2,2)$-graphs we refine the sparse--dense framework of
Feder et al.~\cite{DBLP:conf/stoc/FederHKM99}. The key point is a general
feature of the sparse--dense local search: when the current sparse side is
compared with a target sparse side, the set removed from the current sparse
side is contained in the intersection of the current sparse side and the target
dense side. In the bipartite/co-bipartite setting, where both classes are
hereditary, this means that the removed set induces a graph that is
simultaneously bipartite and co-bipartite, which substantially restricts the
relevant choices. We combine this observation with a preprocessing step based
on the improved recognition algorithms for $(2,1)$- and $(1,2)$-graphs. The
resulting algorithm recognizes $(2,2)$-graphs in time
$O(n^4 (n+\min\{m,\overline m\})^3)$.

Table~\ref{tab:table2} summarizes the previously best known bounds and the
improvements obtained in this paper.

\begin{table}[h!]
  \begin{center}
    \begin{tabular}{c|c|c|c}
      \textbf{Graph Class} & \textbf{Previous bound} & \textbf{Ref.} & \textbf{This paper} \\
      \hline
      $(2,1)$-graphs & $O((n+m)^2)$ & \cite{BRANDSTADT199859} & $O(n(n+m))$ \\
      $(1,2)$-graphs & $O((n+\overline m)^2)$ & \cite{BRANDSTADT199859} & $O(n(n+\overline m))$ \\
      $(2,2)$-graphs & $O(n^{10}(n+m))$ & \cite{DBLP:conf/stoc/FederHKM99} & $O(n^4 (n+\min\{m,\overline m\})^3)$ \\
    \end{tabular}
    \caption{Complexity comparison between the previously best known recognition algorithms and the algorithms presented in this paper.}
    \label{tab:table2}
  \end{center}
\end{table}

Related work has also considered $(k,\ell)$-graph recognition within
specific graph classes, including chordal graphs~\cite{DBLP:journals/dam/HellKNP04},
cographs~\cite{DBLP:journals/disopt/DemangeEW05}, and line graphs~\cite{DBLP:journals/tcs/DemangeEW05}.
Several probe variants have also been studied, both in the partitioned
setting~\cite{DANTAS201867} and in the unpartitioned
setting~\cite{DANTAS2016294}.

\section{Preliminaries}\label{sec:Preliminaries}

We use standard graph-theoretic notation. All graphs considered in this
paper are finite, simple, and undirected. For a graph $G=(V,E)$, we write
$n=|V|$ and $m=|E|$. The complement of $G$ is denoted by $\overline G$, and
\[
   \overline m = |E(\overline G)|=\binom n2-m .
\]
For a subset $W\subseteq V$, we denote by $G[W]$ the subgraph of $G$
induced by $W$.

For a vertex $v\in V$, let
\[
   N_G(v)=\{w\in V : vw\in E(G)\}
\]
be the open neighborhood of $v$, and let $N_G[v]=N_G(v)\cup\{v\}$ be its
closed neighborhood. We write
\[
   \overline N_G(v)=V\setminus N_G[v]
\]
for the set of non-neighbors of $v$ in $G$. Equivalently,
$\overline N_G(v)=N_{\overline G}(v)$. When the graph is clear from the
context, we omit the subscript. We denote by $\omega(G)$ the size of a
maximum clique of $G$.

A graph is \emph{bipartite} if its vertex set can be partitioned into two
independent sets, and \emph{co-bipartite} if its complement is bipartite,
or equivalently, if its vertex set can be partitioned into two cliques.
Thus, bipartite graphs are precisely the $(2,0)$-graphs and co-bipartite
graphs are precisely the $(0,2)$-graphs.

We now recall the two algorithmic ingredients on which our improvements
are based.

\subsection{The recognition algorithm for $(2,1)$-graphs}

Brandst\"adt, Le and Szymczak~\cite{BRANDSTADT199859} gave an
$O((n+m)^2)$-time algorithm for recognizing $(2,1)$-graphs. We reproduce
below the part of their algorithm that is relevant for our purposes, with
minor notational adaptations. The procedure \textsc{Modify}, used in the
case where the residual set contains a triangle, is treated as a black box;
the only fact needed here is that all calls to \textsc{Modify} made by the
algorithm, except for the final residual case, fit within the
$O(n(n+m))$ part of the running time.

\begin{algorithm}[tbp]
\footnotesize
\caption{Recognition of $(2,1)$-graphs, after Brandst\"adt, Le and Szymczak~\cite{BRANDSTADT199859}}
\label{alg:BLS}
\begin{algorithmic}[1]
\Require A graph $G=(V,E)$.
\Ensure Decide whether $G$ is a $(2,1)$-graph.
\ForAll{$v\in V$}
   \State check whether $G[N(v)]$ is a split graph
   \State check whether $G[N(v)]$ is bipartite
\EndFor
\If{there exists $v\in V$ such that $G[N(v)]$ is neither split nor bipartite}
   \State \Return no
\EndIf
\State
\[
   C_F \gets
   \{v\in V : G[N(v)] \text{ is not split and } G[N(v)] \text{ is bipartite}\}
\]
\State
\[
   B_F \gets
   \{v\in V : G[N(v)] \text{ is split and } G[N(v)] \text{ is not bipartite}\}
\]
\If{$C_F$ is not a clique or $G[B_F]$ is not bipartite}
   \State \Return no
\EndIf
\State
\[
   R \gets
   \{v\in V : G[N(v)] \text{ is split and } G[N(v)] \text{ is bipartite}\}.
\]
\If{$R=\emptyset$}
   \State \Return yes
\ElsIf{$\omega(G[R])\geq 3$}
   \State choose a triangle $\{a,b,c\}$ in $G[R]$
   \ForAll{$x\in\{a,b,c\}$}
      \State run \textsc{Modify}$(x)$
   \EndFor
   \If{all three calls fail}
      \State \Return no
   \Else
      \State \Return yes
   \EndIf
\Else
   \Comment{residual case: $\omega(G[R])\leq 2$}
   \ForAll{cliques $C_R\subseteq R$ with $0\leq |C_R|\leq 2$}
      \State $C\gets C_F\cup C_R$
      \State $B\gets B_F\cup (R\setminus C_R)$
      \If{$C$ is a clique and $G[B]$ is bipartite}
         \State \Return yes
      \EndIf
   \EndFor
   \State \Return no
\EndIf
\end{algorithmic}
\end{algorithm}

The sets $C_F$, $B_F$, and $R$ have the following interpretation. In any
valid partition of $V$ into a clique $C$ and a bipartite set $B$, the
vertices of $C_F$ are forced to belong to $C$, the vertices of $B_F$ are
forced to belong to $B$, and the vertices of $R$ remain undecided. The
algorithm is already bounded by $O(n(n+m))$ except for the final residual
case $\omega(G[R])\leq 2$, where it may inspect quadratically many cliques
of size at most two in $R$. Section~\ref{sec:21} shows how to replace this
last enumeration by a linear number of tests.

Since $G$ is a $(k,\ell)$-graph if and only if $\overline G$ is an
$(\ell,k)$-graph, an algorithm for $(2,1)$-graphs immediately gives an
algorithm for $(1,2)$-graphs by applying it to the complement graph.

\subsection{The sparse--dense framework}

We also use the sparse--dense partition framework of
Feder, Hell, Klein and Motwani~\cite{DBLP:conf/stoc/FederHKM99}. Let
$\mathcal S$ and $\mathcal D$ be two graph classes, called sparse and dense,
respectively. Assume that there is a constant $c$ such that, whenever
$S\in\mathcal S$ and $D\in\mathcal D$ are induced subgraphs of the same
graph, their vertex sets intersect in at most $c$ vertices.

A \emph{sparse--dense partition} of a graph $G=(V,E)$ with respect to
$(\mathcal S,\mathcal D)$ is a partition $(V_S,V_D)$ of $V$ such that
$G[V_S]\in\mathcal S$ and $G[V_D]\in\mathcal D$.

For a set $X\subseteq V$, define
\[
\mathcal N_{2c+1}(X)
 =
 \{Y\subseteq V : |X\setminus Y|\leq c,\ 
                  |Y\setminus X|=|X\setminus Y|+1\},
\]
and
\[
\mathcal N_{2c}(X)
 =
 \{Y\subseteq V : |X\setminus Y|\leq c,\ 
                  |Y\setminus X|\leq |X\setminus Y|\}.
\]
The first family consists of the local moves that increase the size of the
sparse side by one, while the second consists of the local moves of
nonpositive size change around a local optimum.

\begin{algorithm}[tbp]
\footnotesize
\caption{Sparse--dense local search of Feder, Hell, Klein and Motwani~\cite{DBLP:conf/stoc/FederHKM99}}
\label{alg:FHKM}
\begin{algorithmic}[1]
\Require A graph $G=(V,E)$; graph classes $\mathcal S,\mathcal D$ with
intersection bound $c$; recognition procedures for $\mathcal S$ and
$\mathcal D$.
\Ensure All sparse--dense partitions of $G$ with respect to
$(\mathcal S,\mathcal D)$.
\State $S^\star\gets\emptyset$
\Repeat
   \State
   \[
      \mathcal F\gets
      \{S'\in\mathcal N_{2c+1}(S^\star): G[S']\in\mathcal S\}.
   \]
   \If{$\mathcal F\neq\emptyset$}
      \State choose $S'\in\mathcal F$ and set $S^\star\gets S'$
   \Else
      \State stop the local search
   \EndIf
\Until{the local search has stopped}
\State $\Pi\gets\emptyset$
\ForAll{$S''\in\mathcal N_{2c}(S^\star)$}
   \If{$G[S'']\in\mathcal S$ and $G[V\setminus S'']\in\mathcal D$}
      \State add $(S'',V\setminus S'')$ to $\Pi$
   \EndIf
\EndFor
\State \Return $\Pi$
\end{algorithmic}
\end{algorithm}

The following theorem is the algorithmic consequence of the framework.

\begin{theorem}[Feder, Hell, Klein and Motwani~\cite{DBLP:conf/stoc/FederHKM99}]
\label{thm:FHKM}
Let $\mathcal S$ and $\mathcal D$ be graph classes with intersection bound
$c$. If membership in both classes can be tested in time $T(n)$, then every
$n$-vertex graph has at most $n^{2c}$ sparse--dense partitions with respect
to $(\mathcal S,\mathcal D)$, and all of them can be found in time
$O(n^{2c+2}T(n))$.
\end{theorem}

In particular, recognizing $(2,2)$-graphs fits this framework by taking
$\mathcal S$ to be the class of bipartite graphs and $\mathcal D$ to be the
class of co-bipartite graphs. The intersection bound is then $c=4$, because
a graph that is both bipartite and co-bipartite has at most four vertices in
each common sparse--dense intersection. Since bipartite and co-bipartite
graphs can be recognized in linear time, this gives an
$O(n^{10}(n+m))$-time recognition algorithm for $(2,2)$-graphs. Our
improvement in Section~\ref{sec:22} starts from this framework and reduces
the local-search enumeration by using the additional structure of such
intersections.

\section{$(2,1)$- and $(1,2)$-graphs}\label{sec:21}

In this section we improve the last branch of Algorithm~\ref{alg:BLS},
namely the residual case in which the undecided set $R$ induces no triangle.
This yields a faster recognition algorithm for $(2,1)$-graphs. The result
for $(1,2)$-graphs then follows by complementation.

We first record the structural fact that allows us to reduce the number of
candidate pairs that have to be inspected.

\begin{proposition}\label{prop:odd-cycle-edges}
Let $H$ be a non-bipartite graph on $n$ vertices, and let $Q$ be a shortest
odd cycle of $H$. Then the number of edges of $H$ with at least one endpoint
in $V(Q)$ is $O(n)$.
\end{proposition}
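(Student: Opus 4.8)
The plan is to split the edges incident to $V(C_{2k+1})$ into those with both endpoints on the cycle and those with exactly one endpoint on the cycle, and to bound each type using the minimality of $C_{2k+1}$. Throughout, write $C=C_{2k+1}=v_0v_1\cdots v_{2k}v_0$ and let $U=V(H)\setminus V(C)$; note $2k+1\le n$ and $|U|\le n$.

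First I would show that $C$ is chordless. If $v_iv_j$ were a chord, the two arcs of $C$ between $v_i$ and $v_j$ would have lengths $p,q\ge 2$ with $p+q=2k+1$; since the sum is odd, exactly one arc is even, say the one of length $p$, and then that arc together with the chord forms an odd cycle of length $p+1\le 2k<2k+1$, contradicting minimality. Hence the only edges with both endpoints in $V(C)$ are the $2k+1$ cycle edges.

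The main step is to bound the edges from $U$ to $V(C)$ by showing that, when $k\ge 2$, every $u\in U$ has at most two neighbours on $C$. Suppose $u$ is adjacent to $v_i$ and $v_j$. The two arcs between $v_i$ and $v_j$ have lengths $p,q$ with $p+q=2k+1$, exactly one of them odd; say $p$ is odd. Then the arc of length $p$ together with the edges $uv_i$ and $uv_j$ is an odd cycle of length $p+2$, so minimality forces $p+2\ge 2k+1$, i.e.\ $p\ge 2k-1$; as $p\le 2k$ and $p$ is odd we get $p=2k-1$ and $q=2$. Thus any two neighbours of $u$ on $C$ lie at cyclic distance exactly $2$, i.e.\ they are of the form $v_a$ and $v_{a\pm 2}$. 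If $u$ had three neighbours $v_a,v_b,v_c$, each pair would be at cyclic distance $2$, so, reading indices in $\mathbb{Z}_{2k+1}$, we would need $b-a\equiv\pm2$, $c-a\equiv\pm2$, and $c-b\equiv\pm2$; but the first two give $c-b\in\{0,\pm4\}$, which meets $\{\pm2\}$ modulo $2k+1$ only when $2k+1\le 3$. Hence for $k\ge 2$ no such triple exists, $u$ has at most two neighbours on $C$, and the number of edges between $U$ and $V(C)$ is at most $2|U|\le 2n$.

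Combining the two bounds gives at most $(2k+1)+2n=O(n)$ edges incident to $V(C)$ when $k\ge 2$; and when $k=1$ the cycle has only three vertices, so the edges incident to it trivially number at most $3(n-1)=O(n)$. The delicate point, and the only place where minimality is genuinely used, is the neighbour-counting argument of the third step: the whole difficulty lies in showing that two neighbours of an outside vertex must sit at cyclic distance $2$ and that three cannot coexist, after which the $O(n)$ bound is immediate.
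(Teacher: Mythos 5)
Your proof is correct and follows essentially the same route as the paper: both use minimality of the odd cycle to show that each vertex outside $C_{2k+1}$ has at most two neighbours on it (your arc-parity argument forcing two neighbours to lie at cyclic distance exactly $2$, plus the modular check that no three vertices can be pairwise at distance $2$, is a symmetric repackaging of the paper's positional case analysis with $v_1$ fixed), after which the $O(n)$ bound is immediate. Your explicit verification that the shortest odd cycle is chordless is a small point the paper passes over silently when asserting that the cycle itself contributes $O(n)$ edges, so that step is welcome extra care rather than a genuine divergence.
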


\begin{proof}
Write $Q=v_1v_2\cdots v_{2k+1}v_1$. If $k=1$, then $Q$ is a triangle, and
the number of edges incident to its three vertices is at most $3n-6$, hence
$O(n)$.

Assume now that $k>1$. We first note that $Q$ is chordless. Indeed, any
chord of $Q$ would split $Q$ into two cycles, one of which is odd and shorter
than $Q$, contradicting the choice of $Q$.

We show that every vertex $x\in V(H)\setminus V(Q)$ has at most two
neighbors in $Q$. Suppose, without loss of generality, that $x$ is adjacent
to $v_1$. Then $x$ cannot be adjacent to $v_2$ or to $v_{2k+1}$, since either
adjacency would create a triangle. Moreover, for each $4\leq j\leq 2k-1$,
the vertex $x$ cannot be adjacent to $v_j$: if $j$ is even, then
$xv_1v_2\cdots v_jx$ is an odd cycle shorter than $Q$, while if $j$ is odd,
then $xv_jv_{j+1}\cdots v_{2k+1}v_1x$ is an odd cycle shorter than $Q$.
Finally, $x$ cannot be adjacent to both $v_3$ and $v_{2k}$, because then
$xv_3v_4\cdots v_{2k}x$ would be an odd cycle of length $2k-1$.

Thus, once one neighbor of $x$ in $Q$ is fixed, the only possible second
neighbors are the two vertices at distance two from it along $Q$, and at
most one of them can occur. Hence every vertex outside $Q$ has at most two
neighbors in $Q$. Since $Q$ is chordless, the edges of $H$ with both
endpoints in $V(Q)$ are exactly the $|V(Q)|$ edges of the cycle. The edges
between $Q$ and $V(H)\setminus V(Q)$ are at most $2(n-|V(Q)|)$. Therefore
the total number of edges incident to $V(Q)$ is at most
\[
   |V(Q)|+2(n-|V(Q)|)=2n-|V(Q)|\leq 2n-5=O(n).
\]
\end{proof}

\begin{theorem}\label{theorem:2-1-graph}
It can be decided in $O(n(n+m))$ time whether a graph $G=(V,E)$ is a
$(2,1)$-graph.
\end{theorem}

\begin{proof}
We run Algorithm~\ref{alg:BLS} until its last branch. All previous steps,
including the calls to \textsc{Modify} in the case $\omega(G[R])\geq 3$,
fit within the $O(n(n+m))$ part of the running time. Therefore it remains
to replace the final enumeration in the case
\[
   R\neq\emptyset
   \qquad\text{and}\qquad
   \omega(G[R])\leq 2 .
\]

At this point the vertex set is partitioned into three sets $C_F$, $B_F$,
and $R$, where $C_F$ is forced to belong to the clique side, $B_F$ is forced
to belong to the bipartite side, and the vertices of $R$ are still undecided.
More explicitly, in every feasible $(2,1)$-partition $(C,B)$ extending the
partial assignment, we must have
\[
   C_F\subseteq C,
   \qquad
   B_F\subseteq B,
   \qquad
   C\setminus C_F\subseteq R,
   \qquad
   B\setminus B_F\subseteq R .
\]
Since $\omega(G[R])\leq 2$, at most two vertices of $R$ can be placed in
the clique $C$.

First test whether $G[B_F\cup R]$ is bipartite. If it is, then
\[
   C=C_F,
   \qquad
   B=B_F\cup R
\]
is a valid partition, and we accept.

Assume, from now on, that $G[B_F\cup R]$ is not bipartite. Since
$G[B_F]$ is bipartite, every odd cycle of $G[B_F\cup R]$ contains at least
one vertex of $R$. We compute a shortest odd cycle $Q$ of $G[B_F\cup R]$.
This can be done in $O(n(n+m))$ time by running a breadth-first search from
each vertex of $R$ and keeping a shortest odd cycle found.

Let $(C,B)$ be any valid partition extending $C_F$ and $B_F$. Since
$G[B]$ is bipartite, the odd cycle $Q$ cannot be entirely contained in $B$.
Thus at least one vertex of $V(Q)\cap R$ must belong to $C$. As observed
above, at most two vertices of $R$ can belong to $C$. Therefore every valid
partition is captured by one of the following two types of choices.

\begin{enumerate}
    \item A single vertex $v\in V(Q)\cap R$ is added to the clique side:
    \[
       C=C_F\cup\{v\},
       \qquad
       B=B_F\cup (R\setminus\{v\}).
    \]

    \item Two adjacent vertices $v,w\in R$ are added to the clique side,
    with at least one of them belonging to $V(Q)$:
    \[
       C=C_F\cup\{v,w\},
       \qquad
       B=B_F\cup (R\setminus\{v,w\}).
    \]
\end{enumerate}

For each candidate produced in this way, we check in $O(n+m)$ time whether
$C$ is a clique and whether $G[B]$ is bipartite. The number of candidates
of the first type is at most $|V(Q)|\leq n$.
The number of candidates of the second type is bounded by the number of
edges of $G[B_F\cup R]$ incident to $V(Q)$, even though only edges with both
endpoints in $R$ are tested. This number is $O(n)$ by
Proposition~\ref{prop:odd-cycle-edges}.
Hence only $O(n)$ candidates are tested.

Consequently, the residual case takes $O(n(n+m))$ time. This gives the
claimed total running time.
\end{proof}

\begin{remark}\label{rem:cai-schieber}
The residual case can also be handled using the linear-time algorithm of
Cai and Schieber~\cite{CAI199727} for computing the intersection of all odd
cycles of a graph. Let
\[
   R^\star=\{v\in R : C_F\subseteq N_G(v)\}
\]
be the set of vertices of $R$ that can be added to the clique side together
with $C_F$. After the initial test showing that $G[B_F\cup R]$ is not
bipartite, one may compute the set of vertices that belong to every odd
cycle of $G[B_F\cup R]$. If this set contains a vertex $v\in R^\star$, then
moving $v$ to the clique side makes the remaining graph bipartite. Otherwise,
one may try each vertex $v\in R^\star$ and apply the same procedure to
$G[B_F\cup (R\setminus\{v\})]$ in order to find a vertex
$w\in R^\star\cap N_G(v)$ that belongs to every odd cycle of this graph.
Then $\{v,w\}$ can be added to the clique side. This gives another
$O(n(n+m))$ implementation of the same residual step.
\end{remark}

\begin{corollary}\label{corollary:1-2-graph}
It can be decided in $O(n(n+\overline m))$ time whether a graph $G=(V,E)$
is a $(1,2)$-graph, where $\overline m=|E(\overline G)|$.
\end{corollary}

\begin{proof}
A graph $G$ is a $(1,2)$-graph if and only if its complement $\overline G$
is a $(2,1)$-graph. Applying Theorem~\ref{theorem:2-1-graph} to
$\overline G$ gives a running time of
\[
   O(n(n+|E(\overline G)|))=O(n(n+\overline m)).
\]
\end{proof}

\section{$(2,2)$-graphs}\label{sec:22}

In this section we refine Algorithm~2 for the recognition of
$(2,2)$-graphs. We take the sparse class to be the class of bipartite graphs
and the dense class to be the class of co-bipartite graphs. Thus, a feasible
partition is a pair $(B,C)$ such that
\[
   V(G)=B\cup C,\qquad B\cap C=\emptyset,
\]
where $G[B]$ is bipartite and $G[C]$ is co-bipartite.

The generic application of Theorem~2.1 gives an
\[
   O(n^{10}(n+m))
\]
recognition algorithm. We show how to reduce this bound to
\[
   O\bigl(n^4(n+\min\{m,\overline m\})^3\bigr),
\]
where $\overline m=|E(\overline G)|$.

The improvement has two ingredients. First, the local moves of the
sparse--dense framework can be restricted because the vertices removed from the current bipartite side lie in the intersection
of the current bipartite side and the target co-bipartite side, and therefore
induce a graph that is both bipartite and co-bipartite. Second, a preprocessing step based on the algorithms for
$(2,1)$- and $(1,2)$-graphs produces a bipartite set that is already within
constant distance of a target sparse side.

\subsection{Restricting the local moves}

In both phases of Algorithm~2, a local move is obtained by removing a set
$P$ from the current sparse side $S^\star$ and adding a set $Q$ of vertices.
We may assume that
\[
   P\cap Q=\emptyset,
\]
since any vertex that is both removed and added back can simply be ignored.

The following observation is the local-optimality property used by the
sparse--dense framework.

\begin{observation}
\label{obs:22-local-optimum}
After Phase~I of Algorithm~2, if a feasible sparse--dense partition
$(B^\star,C^\star)$ exists, then
\[
   |S^\star|\geq |B^\star|.
\]
\end{observation}

\begin{proof}
Suppose, to the contrary, that $|S^\star|<|B^\star|$. Let
\[
   P=S^\star\setminus B^\star=S^\star\cap C^\star.
\]
Since $G[S^\star]$ is bipartite and $G[C^\star]$ is co-bipartite, the set
$P$ lies in the intersection of a bipartite induced subgraph and a
co-bipartite induced subgraph. Removing $P$ from $S^\star$ leaves a subset of
$B^\star$. Since $|S^\star|<|B^\star|$, we can add $|P|+1$ vertices from
$B^\star\setminus S^\star$ and obtain a larger bipartite set. This contradicts
the fact that Phase~I has stopped.
\end{proof}

The same reasoning applies in Phase~II. If the target sparse side is
$B^\star$, then the vertices removed from the current sparse side are again
\[
   S^\star\setminus B^\star=S^\star\cap C^\star.
\]
Thus the removed set is always an intersection of a bipartite induced subgraph
and a co-bipartite induced subgraph.

\begin{lemma}
\label{lem:22-restricted-P}
In the $(2,2)$ case, in both phases of Algorithm~2, it is sufficient to
consider removed sets $P\subseteq S^\star$ such that $G[P]$ is both bipartite
and co-bipartite. In particular, $|P|\leq 4$.
\end{lemma}

\begin{proof}
Let $(B^\star,C^\star)$ be a feasible partition, where $G[B^\star]$ is
bipartite and $G[C^\star]$ is co-bipartite. The relevant removed set is
\[
   P=S^\star\setminus B^\star=S^\star\cap C^\star.
\]
Since $P\subseteq S^\star$, the graph $G[P]$ is bipartite. Since
$P\subseteq C^\star$, the graph $G[P]$ is co-bipartite. Hence $G[P]$ is both
bipartite and co-bipartite.

A graph and its complement cannot both be bipartite on more than four
vertices. Indeed, on five vertices, every bipartite graph has an independent
set of size at least three, which becomes a triangle in the complement. Thus
$|P|\leq 4$.
\end{proof}

The graphs on at most four vertices that are both bipartite and co-bipartite
are exactly
\[
   \emptyset,\quad
   K_1,\quad
   K_2,\quad
   2K_1=\overline{K_2},\quad
   K_1+K_2=\overline{K_{1,2}},\quad
   K_3-e,\quad
   2K_2=\overline{C_4},\quad
   P_4,\quad
   C_4.
\]

\begin{lemma}
\label{lem:22-generate-P}
All candidate removed sets $P$ needed by the restricted local search can be
generated in
\[
   O((n+m)^2)
\]
time.
\end{lemma}

\begin{proof}
By Lemma~\ref{lem:22-restricted-P}, it is enough to generate all vertex sets
inducing one of the nine graphs listed above.

The empty set, the one-vertex sets, and the two-vertex sets are generated
directly. This accounts for the types
\[
   \emptyset,\quad K_1,\quad K_2,\quad 2K_1.
\]
These sets are generated in $O(n^2)$ time.

For the four-vertex types, we enumerate all pairs of vertex-disjoint edges.
Let these edges be $ab$ and $cd$. We inspect the adjacencies between
$\{a,b\}$ and $\{c,d\}$. Since both $ab$ and $cd$ are edges, the induced graph
on $\{a,b,c,d\}$ is one of the relevant four-vertex types precisely when each
endpoint of one edge is adjacent to at most one endpoint of the other edge. In
that case, the induced graph is one of
\[
   2K_2,\quad P_4,\quad C_4.
\]
Thus all candidates of these three types are generated by scanning pairs of
vertex-disjoint edges. This costs $O(m^2)$ time.

For the three-vertex types with one edge, we enumerate every edge $ab$ and
every vertex $x\notin\{a,b\}$. If $x$ is adjacent to neither endpoint of
$ab$, then $G[\{a,b,x\}]$ is $K_1+K_2$. If $x$ is adjacent to exactly one
endpoint of $ab$, then $G[\{a,b,x\}]$ is $K_3-e$. If $x$ is adjacent to both
endpoints, the induced graph is a triangle and is discarded. Hence all
candidates of types
\[
   K_1+K_2,\quad K_3-e
\]
are generated in $O(nm)$ time.

Together, these procedures generate every set inducing one of the nine
possible graphs. Each generated set is checked in constant time using adjacency
queries, and duplicates do not affect the asymptotic bound. The total time is
\[
   O(n^2+nm+m^2)=O((n+m)^2).
\]
\end{proof}

Consequently, the local search can be restricted to removed sets inducing one
of these nine graphs, instead of considering all subsets of size at most four.

\subsection{Testing the added vertices}

We now describe how to test the vertices added after a removed set $P$ has
been fixed. Let
\[
   B'=S^\star\setminus P.
\]
Since $G[S^\star]$ is bipartite, the graph $G[B']$ is bipartite. We compute
the connected components of $G[B']$, together with one bipartition for each
component.

In Phase~I we seek a set
\[
   Q\subseteq V\setminus B'
\]
such that
\[
   Q\cap P=\emptyset,\qquad |Q|=|P|+1,
\]
and $G[B'\cup Q]$ is bipartite. In Phase~II the analogous search has
$|Q|\leq |P|$.

For a candidate vertex $q\notin B'\cup P$, the graph
$G[B'\cup\{q\}]$ is bipartite if and only if $q$ does not have neighbors in
both sides of the same connected component of $G[B']$. This can be tested by
scanning the neighbors of $q$.

When a vertex $q$ is successfully added, the bipartition of the new graph can
be updated as follows. Components of $G[B']$ with no neighbor of $q$ remain
unchanged. Components that have a neighbor of $q$ are merged with $q$ into one
component; their sides are oriented according to the side on which a neighbor
of $q$ lies. This update can be done in $O(n+m)$ time.

\begin{lemma}
\label{lem:22-find-Q}
For a fixed removed set $P$ with $|P|\leq 4$, the search for a suitable added
set $Q$ can be performed in
\[
   O(n^4(n+m))
\]
time. The same bound applies both to the improving moves of Phase~I and to
the non-increasing moves of Phase~II.
\end{lemma}

\begin{proof}
In Phase~I, we have $|Q|=|P|+1\leq 5$. A direct enumeration of all vertices of
$Q$ would give a factor $n^5$ in the worst case. Instead, we enumerate only the
first at most $|Q|-1$ vertices of $Q$. For each such partial choice, we maintain
the connected components and bipartitions of the current bipartite graph.

After these at most four vertices have been fixed, the last vertex, if still
needed, is found by scanning all remaining vertices and testing whether it can
be added while preserving bipartiteness. For a candidate vertex $q$, this test
is done by scanning the neighbors of $q$ and checking that $q$ does not have
neighbors in both sides of the same connected component of the current
bipartite graph. Over the scan, this costs $O(n+m)$.

Whenever a vertex is successfully added before the last step, the bipartition
of the new graph is updated by merging the components that have a neighbor of
the new vertex and orienting their sides consistently. This update is bounded
by $O(n+m)$. Since at most four vertices are explicitly enumerated and each
partial choice costs $O(n+m)$, the total time is
\[
   O(n^4(n+m)).
\]

In Phase~II, we only need $|Q|\leq |P|\leq 4$, which is no harder. The same
upper bound applies.
\end{proof}

By Lemmas~\ref{lem:22-generate-P} and~\ref{lem:22-find-Q}, one iteration of
Phase~I can be implemented in
\[
   O\bigl((n+m)^2\cdot n^4(n+m)\bigr)
   =
   O(n^4(n+m)^3)
\]
time. The same bound applies to Phase~II. If Phase~I were started from the
empty set, there could be $O(n)$ successful iterations. We next show how to
construct an initial bipartite set that reduces this number to a constant.

\subsection{Preprocessing and forced vertices}

We use the algorithms for $(2,1)$- and $(1,2)$-graphs as subroutines. For each
vertex $v\in V(G)$, we test whether
\[
   G[N(v)]\in (1,2)
\]
and whether
\[
   G[V\setminus N[v]]\in (2,1).
\]
Using the algorithms of the previous sections, the tests over all vertices take
\[
   O(n^2(n+m))
\]
time. Indeed, for a fixed vertex $v$, both induced subgraphs
$G[N(v)]$ and $G[V\setminus N[v]]$ have at most $n$ vertices and at most $m$
edges. Thus, as a uniform upper bound, the corresponding $(1,2)$- and
$(2,1)$-recognition tests cost $O(n(n+m))$ time for this vertex. Since the
tests are performed for all $v\in V(G)$, the total preprocessing cost is
$O(n^2(n+m))$.

Define
\[
\begin{aligned}
   B^F &= \{v\in V :
      G[N(v)]\in (1,2)
      \text{ and }
      G[V\setminus N[v]]\notin (2,1)\},\\
   C^F &= \{v\in V :
      G[N(v)]\notin (1,2)
      \text{ and }
      G[V\setminus N[v]]\in (2,1)\},\\
   R &= \{v\in V :
      G[N(v)]\in (1,2)
      \text{ and }
      G[V\setminus N[v]]\in (2,1)\}.
\end{aligned}
\]
If
\[
   V\neq B^F\cup C^F\cup R,
\]
then some vertex fails both tests. In that case $G$ is not a $(2,2)$-graph.
Likewise, if $G[B^F]$ is not bipartite or $G[C^F]$ is not co-bipartite, then
$G$ is not a $(2,2)$-graph.

\begin{lemma}
\label{lem:22-forced-vertices}
Let $(B^\star,C^\star)$ be a feasible $(2,2)$-partition, where
$G[B^\star]$ is bipartite and $G[C^\star]$ is co-bipartite. Then
\[
   B^F\subseteq B^\star
   \qquad\text{and}\qquad
   C^F\subseteq C^\star.
\]
\end{lemma}

\begin{proof}
If $v\in B^\star$, then the neighbors of $v$ inside $B^\star$ lie in one
independent set, while the neighbors of $v$ inside the co-bipartite side
$C^\star$ lie in two cliques. Hence $G[N(v)]$ is a $(1,2)$-graph.

If $v\in C^\star$, then the vertices in $V\setminus N[v]$ consist of a
bipartite part coming from $B^\star$ together with one clique coming from the
other clique of the co-bipartite side. Hence
$G[V\setminus N[v]]$ is a $(2,1)$-graph.

Therefore, if $G[V\setminus N[v]]$ is not a $(2,1)$-graph, the vertex $v$
cannot belong to $C^\star$, and so it is forced into $B^\star$. This proves
$B^F\subseteq B^\star$. The proof of $C^F\subseteq C^\star$ is symmetric.
\end{proof}

If $R=\emptyset$, then all vertices are forced. In this case, by
Lemma~\ref{lem:22-forced-vertices}, the graph is a $(2,2)$-graph if and only
if $G[B^F]$ is bipartite and $G[C^F]$ is co-bipartite. We may therefore assume
from now on that $R\neq\emptyset$.

Choose a vertex $v\in R$ of minimum degree. Since $v\in R$, there are
partitions
\[
   N[v]=C_1\cup C_2\cup I_3,
   \qquad
   V\setminus N[v]=C_3\cup I_1\cup I_2,
\]
where $v\in C_1$, each $C_i$ is a clique, and each $I_j$ is an independent
set, for $1\leq i,j\leq 3$.

We now seek a set
\[
   B\subseteq I_1\cup I_2\cup I_3
\]
of maximum cardinality satisfying the following four conditions:
\[
\begin{array}{ll}
{\rm (i)} & G[B]\text{ is bipartite},\\
{\rm (ii)} & B^F\cap (I_1\cup I_2\cup I_3)\subseteq B,\\
{\rm (iii)} & C^F\cap B=\emptyset,\\
{\rm (iv)} & G[(I_1\cup I_2\cup I_3)\setminus B]\text{ is co-bipartite}.
\end{array}
\]

\begin{lemma}
\label{lem:22-initial-close}
Let $(B^\star,C^\star)$ be a feasible $(2,2)$-partition. Let $B$ be a
maximum-cardinality subset of $I_1\cup I_2\cup I_3$ satisfying
conditions {\rm (i)}--{\rm (iv)} above. Then
\[
   |B|+6\geq |B^\star|.
\]
\end{lemma}

\begin{proof}
For each $j\in\{1,2,3\}$, the set
\[
   I_j\setminus B^\star=I_j\cap C^\star
\]
is an independent set contained in the co-bipartite graph $G[C^\star]$.
Hence
\[
   |I_j\setminus B^\star|\leq 2.
\]
Similarly, for each $i\in\{1,2,3\}$, the set
\[
   B^\star\cap C_i
\]
is a clique contained in the bipartite graph $G[B^\star]$, and therefore
\[
   |B^\star\cap C_i|\leq 2.
\]

Now consider
\[
   B_0=(I_1\cup I_2\cup I_3)\cap B^\star.
\]
The set $B_0$ satisfies conditions (i)--(iv). Indeed, $G[B_0]$ is bipartite
as an induced subgraph of $G[B^\star]$; by Lemma~\ref{lem:22-forced-vertices},
it contains all forced vertices of $B^F$ that lie in
$I_1\cup I_2\cup I_3$ and contains no vertex of $C^F$; finally,
\[
   (I_1\cup I_2\cup I_3)\setminus B_0
      \subseteq C^\star,
\]
and hence it induces a co-bipartite graph.

Since $B$ is chosen with maximum cardinality among all sets satisfying
(i)--(iv),
\[
   |B|\geq |B_0|
      =
   |(I_1\cup I_2\cup I_3)\cap B^\star|.
\]
Moreover,
\[
\begin{aligned}
   |B^\star\setminus (I_1\cup I_2\cup I_3)|
      &\leq |B^\star\cap C_1|
          + |B^\star\cap C_2|
          + |B^\star\cap C_3|\\
      &\leq 6.
\end{aligned}
\]
Therefore
\[
   |B|+6\geq |B^\star|.
\]
\end{proof}

\subsection{Computing the initial set}

It remains to compute the maximum set $B$ satisfying conditions (i)--(iv).
Instead of searching for $B$ directly, we enumerate the vertices excluded from
it.

Let
\[
   P=(I_1\cup I_2)\setminus B.
\]
Then $G[P]$ must be one of the nine graphs on at most four vertices listed
above. Indeed, $P\subseteq I_1\cup I_2$, and $I_1\cup I_2$ is bipartite.
Moreover, by condition {\rm (iv)}, $P$ is contained in a co-bipartite induced
subgraph. Hence $G[P]$ is both bipartite and co-bipartite. We also require
\[
   C^F\cap (I_1\cup I_2)\subseteq P.
\]
By Lemma~\ref{lem:22-generate-P}, all possible sets $P$ can be enumerated in
\[
   O((n+m)^2)
\]
time.

For each such set $P$, we choose a set
\[
   P'\subseteq I_3\setminus B^F
\]
such that
\[
   C^F\cap I_3\subseteq P',
   \qquad
   |C^F\cap I_3|\leq |P'|\leq 2.
\]
If $|C^F\cap I_3|\geq 1$, then there are $O(n)$ possible choices for $P'$.
If $C^F\cap I_3=\emptyset$, then up to two vertices in $I_3\cap R$ can be
selected to form $P'$. Since $I_3\subseteq N(v)$, we have
\[
   |I_3\cap R|\leq d(v).
\]
Because $v$ was chosen with minimum degree in $R$,
\[
   d(v)(|R|-1)
      \leq
      \sum_{u\in R\setminus\{v\}} d(u)
      \leq 2m.
\]
Therefore the number of possible choices for $P'$ is at most
\[
   1+|I_3\cap R|+\binom{|I_3\cap R|}{2}
      \leq
   1+d(v)+d(v)(|R|-1)
      \leq
   1+2m
      =
   O(m).
\]

For each pair $(P,P')$, define
\[
   B=(I_1\cup I_2\cup I_3)\setminus (P\cup P').
\]
We test in $O(n+m)$ time whether $B$ satisfies conditions (i)--(iv), and keep
a valid set of maximum cardinality.

\begin{lemma}
\label{lem:22-compute-B}
The maximum-cardinality set $B$ satisfying conditions {\rm (i)}--{\rm (iv)}
can be computed in
\[
   O((n+m)^4)
\]
time.
\end{lemma}

\begin{proof}
There are $O((n+m)^2)$ possible sets $P$. For each $P$, there are at most
$O(m)$ possible sets $P'$. Each resulting candidate $B$ is tested in
$O(n+m)$ time. Thus the total time is
\[
   O((n+m)^2\cdot m\cdot (n+m))
      =
   O(m(n+m)^3)
      \subseteq
   O((n+m)^4).
\]
\end{proof}

If no valid set $B$ exists, then $G$ is not a $(2,2)$-graph, because every
feasible partition would produce a valid set by the proof of
Lemma~\ref{lem:22-initial-close}.

\subsection{The final algorithm}

We now combine the previous ingredients. First, we perform the preprocessing
over all vertices and compute the sets $B^F$, $C^F$, and $R$. If some vertex
fails both tests, or if $G[B^F]$ is not bipartite, or if $G[C^F]$ is not
co-bipartite, then we reject.

If $R=\emptyset$, then all vertices are forced. In this case we accept if and
only if $G[B^F]$ is bipartite and $G[C^F]$ is co-bipartite.

Assume now that $R\neq\emptyset$. We choose a minimum-degree vertex
$v\in R$, compute the decompositions
\[
   N[v]=C_1\cup C_2\cup I_3,
   \qquad
   V\setminus N[v]=C_3\cup I_1\cup I_2,
\]
and compute the maximum valid initial set $B$ of
Lemma~\ref{lem:22-compute-B}. If no such set exists, then we reject.

Otherwise, we start Algorithm~2 with
\[
   S^\star:=B,
\]
and restrict all local moves to the removed sets described in
Lemma~\ref{lem:22-restricted-P}. By Lemma~\ref{lem:22-initial-close}, if a
feasible partition $(B^\star,C^\star)$ exists, then
\[
   |B|+6\geq |B^\star|.
\]

We run Phase~I until either no restricted improving move exists or six
successful improvements have been performed. If Phase~I stops because no
restricted improving move exists, then the standard sparse--dense
local-optimality argument, as stated in Observation~\ref{obs:22-local-optimum},
implies that
\[
   |S^\star|\geq |B^\star|
\]
for every feasible partition $(B^\star,C^\star)$. Otherwise, six successful
improvements have been performed. Since each successful Phase~I move increases
$|S^\star|$ by one, Lemma~\ref{lem:22-initial-close} again gives
\[
   |S^\star|\geq |B^\star|.
\]
Thus, in either case, when Phase~I terminates the current bipartite set is at
least as large as the sparse side of any feasible $(2,2)$-partition. Phase~II
is then applied using the same restricted family of removed sets.

To see why Phase~II is sufficient, let $(B^\star,C^\star)$ be a feasible
partition. For this target sparse side, the removed set is
\[
   P=S^\star\setminus B^\star,
\]
and the added set is
\[
   Q=B^\star\setminus S^\star.
\]
By Lemma~\ref{lem:22-restricted-P}, the set $P$ belongs to the restricted
family of removed sets. Since $|S^\star|\geq |B^\star|$, we have
\[
   |Q|\leq |P|.
\]
Therefore the target exchange from $S^\star$ to $B^\star$ is among the
non-increasing moves considered by Phase~II.

The running time is as follows.
\begin{itemize}
   \item The preprocessing over all vertices costs $O(n^2(n+m))$.
   \item The computation of the initial set $B$ costs $O((n+m)^4)$.
   \item Each restricted Phase~I iteration costs $O(n^4(n+m)^3)$, and the
algorithm performs at most six successful iterations before entering Phase~II.
   \item Phase~II also costs $O(n^4(n+m)^3)$.
\end{itemize}
Therefore the total running time is
\[
   O(n^4(n+m)^3).
\]

Finally, the class of $(2,2)$-graphs is closed under complementation. Thus,
if $\overline m<m$, we apply the same algorithm to $\overline G$ instead of
$G$. The construction of $\overline G$ costs $O(n^2)$ time, which is dominated
by the final bound.

\begin{theorem}
\label{thm:22-recognition}
It can be determined in
\[
   O\bigl(n^4(n+\min\{m,\overline m\})^3\bigr)
\]
time whether a graph $G=(V,E)$ belongs to the class $(2,2)$.
\end{theorem}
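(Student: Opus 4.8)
The plan is to take the sparse--dense recognition scheme of Theorem~\ref{thm:sparse-dense} (Algorithm~2), instantiated with $\mathcal S$ the class of bipartite graphs, $\mathcal D$ the class of co-bipartite graphs, and intersection bound $c=4$ --- which already recognizes $(\bm 2,\bm 2)$ in $O(n^{10}(n+m))$ time --- and to drive its running time down by the four successive refinements sketched above, arguing at each step that correctness is preserved while the number and cost of the candidate checks strictly decrease. The correctness backbone throughout is the Observation: after Phase~I the local optimum $S^\star$ satisfies $|S^\star|\ge|V_S|$ for every sparse--dense partition $(V_S,V_H)$, so that Phase~II, a $2c$-local enumeration around $S^\star$, is guaranteed to meet some genuine $V_S$. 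Every refinement must therefore be shown to leave this guarantee intact.

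First I would record the harmless phase modification (requiring the removed set $P$ and the added set $Q$ to be disjoint) and then the characterization of $G[P]$: because any such $P$ lies in the intersection of a bipartite and a co-bipartite induced subgraph we have $|P|\le c=4$, and $G[P]$ is simultaneously bipartite and co-bipartite, so it is one of the nine listed graphs on at most four vertices. Since there are $O((n+m)^2)$ such induced subgraphs and they can all be enumerated within that bound, this replaces the naive $\Theta(n^{2c})$ candidate count by $O((n+m)^2)$ per removal pattern. I would then layer in the second improvement, maintaining the connected components of $G[S^\star\setminus P]$ together with the two sides of each component's bipartition, so that testing whether a single added vertex keeps the graph bipartite costs $O(d(q))$ and merging components after a successful addition costs $O(n)$; this amortizes the repeated bipartiteness tests across the $|Q|$ additions and yields the intermediate bound $O(n^5(n+m)^3)$.

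The crux --- and the step I expect to be the main obstacle --- is the third improvement. The plan is to precompute, for every vertex $v$, whether $G[N(v)]\in(\bm 1,\bm 2)$ and $G[V\setminus N[v]]\in(\bm 2,\bm 1)$ (total time $O(n^2(n+m))$), and to partition $V$ into $B^F$, $C^F$ and $R$; any valid partition $(B^\star,C^\star)$ must satisfy $B^F\subseteq B^\star$ and $C^F\subseteq C^\star$, so the residual choices live inside $R$. Choosing a minimum-degree $v\in R$ and decomposing $N[v]=C_1\cup C_2\cup I_3$ and $V\setminus N[v]=C_3\cup I_1\cup I_2$ into three cliques and three independent sets, I would argue from co-bipartiteness of the dense side that $|B^\star\cap C_k|\le 2$ and $|I_k\setminus B^\star|\le 2$ for each $k$, whence $|B^\star\setminus(I_1\cup I_2\cup I_3)|\le 6$; consequently a maximum bipartite $B\subseteq I_1\cup I_2\cup I_3$ obeying the $B^F/C^F$ constraints comes within six vertices of $|B^\star|$, so seeding Phase~I with $S^\star:=B$ and allowing only six improvement iterations restores $|S^\star|\ge|B^\star|$, after which the Observation makes Phase~II conclusive. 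The delicate points here are twofold: proving that restricting the search to $I_1\cup I_2\cup I_3$ and capping Phase~I at six iterations loses no valid partition, and establishing the complexity of the inner search --- enumerating the $O((n+m)^2)$ patterns $P=(I_1\cup I_2)\setminus B$ together with the $P'\subseteq I_3$ choices, where the minimum-degree choice of $v$ gives $|I_3\cap R|\le d(v)$ and $d(v)\,|R|\le 2m$, so the number of $P'$ is $O(m)$ and finding $B$ costs $O((n+m)^4)$, while the six-iteration Phase~I and Phase~II together contribute the dominant $O(n^4(n+m)^3)$.

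Finally I would assemble the per-step estimates into the overall bound $O(n^4(n+m)^3)$ and close with the self-complementarity of the class: since $G\in(\bm 2,\bm 2)$ if and only if $\overline G\in(\bm 2,\bm 2)$, running the algorithm on whichever of $G$ and $\overline G$ has the fewer edges --- at an additive $O(n^2)$ cost to build the complement when $\overline m<m$ --- replaces $m$ by $\min\{m,\overline m\}$ and gives the claimed $O(n^4(n+\min\{m,\overline m\})^3)$.
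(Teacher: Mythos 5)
Your proposal is correct and follows essentially the same route as the paper's own argument: the same instantiation of the sparse--dense framework with $c=4$, the same disjointness modification and nine-graph characterization of $G[P]$ with its $O((n+m)^2)$ enumeration, the same component-maintenance trick yielding $O(n^5(n+m)^3)$, the same third improvement via $B^F$, $C^F$, $R$, the minimum-degree vertex decomposition with $|B|+6\ge|B^\star|$ and the six-iteration restart of Phase~I at cost $O((n+m)^4)$, and the final pass to $\overline{G}$ when $\overline{m}<m$. The intermediate bounds you state ($O(n^6(n+m)^3)$, $O(n^5(n+m)^3)$, $O((n+m)^4)$, $O(n^4(n+m)^3)$) all match the paper's, so there is nothing to flag.
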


\section{Conclusions and open questions}
\label{sec:conclusions}

We revisited the recognition of the first nontrivial polynomial
$(k,\ell)$-graph classes beyond split, bipartite, and co-bipartite graphs.
For $(2,1)$-graphs, we showed that the residual case
$\omega(G[R])\leq 2$ in the algorithm of Brandst\"adt, Le and Szymczak,
which was responsible for the extra quadratic enumeration beyond the
$O(n(n+m))$ part of the procedure, can be handled by a linear number of
tests. By complementation, the same improvement applies to $(1,2)$-graphs.
For $(2,2)$-graphs, we refined the sparse--dense framework of Feder, Hell,
Klein and Motwani by restricting the local moves and by using a preprocessing
step based on the improved recognition algorithms for $(2,1)$- and
$(1,2)$-graphs.

Several natural questions remain open. First, although the running time for
$(2,1)$-graphs is improved to $O(n(n+m))$, it is not clear whether this is
best possible. Is there a linear-time or near-linear-time recognition
algorithm for $(2,1)$-graphs? Equivalently, can the initial neighborhood
tests, the calls to the modification procedure, and the residual case be
combined or replaced by a more direct structural algorithm?

The most significant gap concerns $(2,2)$-graphs. The algorithm presented
here improves the previously known bound from $O(n^{10}(n+m))$ to
\[
   O\bigl(n^4(n+\min\{m,\overline m\})^3\bigr),
\]
but this is still a high-degree polynomial. It would be interesting to know
whether $(2,2)$-graphs admit a substantially faster recognition algorithm,
for instance with running time $O(n^c)$ for a smaller constant $c$, or with
a bound closer to the running times known for $(2,1)$- and $(1,2)$-graphs.

A related question is whether the sparse--dense local-search framework is
inherently too general for the case of $(2,2)$-graphs. The proof of our
bound still uses local moves around a sparse side, even after the
preprocessing step. A more direct structural description of the residual
instances might lead to a simpler and faster algorithm.

More broadly, one of the refinements used here is not specific to
$(2,2)$-graphs: in any application of the sparse--dense framework, the set
removed from the current sparse side during a local move is contained in the
intersection of the current sparse side and the target dense side. Thus, beyond
its cardinality bound, the removed set has a structural origin that may further
restrict the relevant local moves. In the present paper this observation is
specialized to the bipartite/co-bipartite setting, where both classes are
hereditary and the removed set induces a graph that is simultaneously bipartite
and co-bipartite. This yields a small list of possible removed sets. It would
be interesting to explore whether the same restricted local-search viewpoint
leads to faster algorithms for other sparse--dense partition problems.

\section*{Acknowledgments}

Partially supported by CONICET (PIP 11220200100084CO), UBACyT (20020220300079BA and 20020190100126BA), and ANPCyT (PICT-2021-I-A-00755). We would like to thank Eric Brandwein for his valuable comments.

\end{document}